\documentclass[aap,preprint]{imsart}

\usepackage{amsfonts,psfrag,epsfig}
\usepackage{color}
\usepackage{amsmath,amssymb}
\usepackage[amsmath,thmmarks]{ntheorem}


\newtheorem{theorem}{Theorem}
\newtheorem{lemma}[theorem]{Lemma}

\newtheorem{proposition}[theorem]{Proposition}

\newtheorem{definition}{Definition}

{
\theoremstyle{nonumberplain}
\theoremsymbol{\ensuremath{\Box}}
\theoremheaderfont{\normalfont\itshape}
\theorembodyfont{\normalfont}
\newtheorem{proof}{Proof.}

\theoremstyle{empty}

}


\newcommand{\bind}{\bold{I}}

\newcommand{\bW}{\boldsymbol{W}}

\newcommand{\sig}{\boldsymbol{\sigma}}
\newcommand{\brho}{\boldsymbol{\rho}}
\newcommand{\bpi}{\boldsymbol{\pi}}
\newcommand{\bpit}{\widehat{\boldsymbol{\pi}}}
\newcommand{\pit}{\widehat{\pi}}
\newcommand{\bsigma}{\boldsymbol{\sigma}}

\newcommand{\rhO}{\boldsymbol{\rho}}
\newcommand{\lamb}{\boldsymbol{\lambda}}
\newcommand{\Lamb}{\boldsymbol{\Lambda}}

\newcommand{\cT}{\mathcal{T}}


\newcommand{\beq}{\begin{eqnarray}}
\newcommand{\eeq}{\end{eqnarray}}
\newcommand{\beqn}{\begin{equation}}
\newcommand{\eeqn}{\end{equation}}
\newcommand{\R}{\mathbb{R}}
\newcommand{\N}{\mathbb{N}}

\newcommand{\Rp}{\mathbb{R}_+}

\newcommand{\bzero}{\mathbf{0}}
\newcommand{\bone}{\mathbf{1}}

\newcommand{\mc}{\mathcal}
\newcommand{\mb}{\mathbf}

\newcommand{\cM}{\mc{M}}

\newcommand{\cX}{\mc{X}}

\newcommand{\cE}{\mc{E}}
\newcommand{\cG}{\mc{G}}

\newcommand{\cI}{\mc{I}}

\newcommand{\bu}{\mb{u}}
\newcommand{\bv}{\mb{v}}

\newcommand{\bx}{\mb{x}}

\newcommand{\bQ}{\mathbf{Q}}

\newcommand{\E}{\mathbb{E}}


\newcommand{\y}{\boldsymbol{y}}

\DeclareMathOperator{\Conv}{\textsf{Conv}}

\begin{document}
\begin{frontmatter}

\title{Efficient Queue-based CSMA with Collisions}

\runtitle{Randomized Network Scheduling}

\begin{aug}
  \author{\fnms{D.} \snm{Shah} \quad \fnms{J.} \snm{Shin}\thanksref{shin}\ead[label=e1]{jinwoos@mit.edu}}
  \runauthor{Shah \& Shin}
  \affiliation{Massachusetts Institute of Technology}
\thankstext{shin}{Both authors are with the Laboratory for
Information and Decision Systems at MIT. DS and JS are with
the department of EECS and Mathematics, respectively.
Authors' email addresses: {\tt \{devavrat, jinwoos\}@mit.edu}}
\end{aug}

\vspace{.1in}

\begin{abstract}

Recently there has been considerable interest in the
design of efficient carrier sense multiple access(CSMA)
protocol for wireless network starting works by
\cite{RS08}\cite{RSS09}\cite{JW08}\cite{SS}\cite{JSSW}.
The basic assumption underlying these results is availability
of perfect carrier sense information. This allows for
design of continuous time algorithm under which collisions
are avoided.

The primary purpose of this note is to show how these
results can be extended in the case when carrier sense
information may not be perfect, or equivalently delayed.
Specifically, an adaptation of algorithm in \cite{RSS09, SS}
is presented here for time slotted setup with carrier
sense information available only at the end of the time slot.
To establish its throughput optimality, in additon to
method developed in \cite{RSS09, SS}, understanding
properties of stationary distribution of a certain {\em non-reversible}
Markov chain as well as bound on its mixing time is essential.
This note presents these key results.

A longer version of this note will provide detailed account
of how this gets incorporated with methods of \cite{RSS09, SS}
to provide positive recurrence of underlying network Markov
process. In addition, these results will help design optimal
rate control in conjuction with CSMA in presence of
collision building upon method of \cite{JSSW}.


\end{abstract}

\begin{keyword}[class=AMS]
\kwd[Primary ]{60K20}
\kwd{68M12}
\kwd[; Secondary ]{68M20}
\end{keyword}

\begin{keyword}
\kwd{Wireless Medium Access}
\kwd{Buffered Circuit Switched Network}
\kwd{Aloha}
\kwd{Stability}
\kwd{Scheduling}
\kwd{Mixing time}
\kwd{Slowly Varying Markov Chain}
\end{keyword}

\end{frontmatter}

\section{Setup}\label{sec:model}

We consider a {\em single-hop} wireless network of $n$ queues.
Queues receive work as per exogeneous arrivals and work leaves
the system upon receiving service. Time is slotted and indexed
by $\tau \in \{0,1,\dots\}$. Arrival process is assumed to be
discrete time and brings unit sized packets. Let
$Q_i(\tau) \in \N$ be number of packets waiting at the $i$th
queue in the begining of time slot $\tau$. Let $A_i(\tau)$
be the total number of packets arrived to queue $i$ till
the end of time slot $\tau$. For convenience, we shall
assume that in a given time slot, arrivals happen at the
end of the time slot. Also assume $A_i(\cdot)$ is a Bernoulli
i.i.d. process with rate $\lambda_i$, i.e.
$\lambda_i = \Pr(A_i(\tau)-A_i(\tau-1) = 1)$ and
$A_i(\tau)-A_i(\tau-1) \in \{0,1\}$ for all $i, \tau \geq 1$.
Let $\bQ(\tau)=[Q_i(\tau)]_{1{\le}i{\le}n}$ and initially
$\tau=0$, $\bQ(0) = \bzero$\footnote{Bold letters are reserved
for vectors; $\bzero, \bone$ represent vectors of all $0$s \& all $1$s
respectively.}.

The work from queues is served at the unit rate, but subject to
{\em interference} constraints. Specifically, let $G = (V,E)$
denote the inference graph between the $n$ queues, represented by
vertices $V = \{1,\dots n\}$ and edges $E$: an $(i,j) \in E$ implies
that queues $i$ and $j$ can not transmit simultaneously since
their transmission {\em interfere} with each other. Formally,
let $\sigma_i(\tau) \in \{0,1\}$ denotes whether the queue $i$
is transmitting at time $\tau$, i.e. work in queue $i$ is being served
at unit rate at time $\tau$ and $\sig(\tau) = [\sigma_i(\tau)]$.
Then, it must be that for $\tau \in \N$,
$$\sig(\tau) \in \cI(G) \stackrel{\Delta}{=}
\{ \rhO = [\rho_i] \in \{0,1\}^n :
   \rho_i + \rho_j \le 1\text{ for all }(i,j) \in E \}.$$
We shall assume that if a non-empty queue $i$ is served in time
slot $\tau$, i.e. $Q_i(\tau) \geq 1$ and $\sigma_i(\tau) = 1$
then a packet departs from it near the end of the time slot
$\tau$, but before arrival happens. In summary, queueing dynamics:
for any $\tau \geq 0$ and $1\leq i\leq n$,
$$ Q_i(\tau+1) = Q_i(\tau) - \sigma_i(\tau) \bind_{\{Q_i(\tau) > 0\}} + A_i(\tau). $$


\subsection{Scheduling constraints}  The scheduling algorithm decides
the schedule $\bsigma(\tau) \in \cI(G)$ in the begining of each time
slot, possibly using $\bQ(\tau)$ and past history. This decision is
made in a distributed manner by nodes. Specifically, in the beginning
of each time slot, each node makes a decision to transmit or not.
At the end of the time slot, node knows the following:
\begin{itemize}
\item[$\circ$] if it attempted to transmit, whether its attempt
was successful;
\item[$\circ$] if it did not attempt to transmit, whether any of its
neighbor attempting to transmit was successful.
\end{itemize}
In summary, each node has delayed carrier sense information
that is available at the end of the time slot.

\subsection{Capacity region} From the perspective of network
performance, we would like the scheduling algorithm to be such
that the queues in network remain as small as possible for the
largest possible range of arrival rate vectors.  To formalize
this notion of performance, we define the capacity region.
Let $\Lamb$ be the capacity region defined as
\begin{eqnarray}
 \Lamb & = & \Conv(\cI(G))\nonumber\\
 &=&\left\{ \y \in \Rp^n : \y \leq \sum_{\sig \in \cI(G)} \alpha_{\sig} \sig,
~\mbox{with}~\alpha_{\sig} \geq 0,~\mbox{and}~\sum_{\sig \in \cI(G)} \alpha_{\sig} \leq 1 \right\}.
\label{eq:c1}
\end{eqnarray}
\begin{definition}[throughput optimal] {\em A scheduling algorithm is
called \\ { throughput optimal}, or {stable}, or
providing {100\% throughput}, if for  any $\lamb \in \Lamb^o$ the
(appropriately defined) underlying network Markov process is
\emph{positive (Harris) recurrent}}.
\end{definition}

\section{Our algorithm}\label{sec:algo}

We present a randomized  algorithm that is direct adaptation of the
algorithm in \cite{RSS09, SS} for the discrete time setting.

In the beginning of each time slot, say $\tau$, each node (or queue)
does the following. With probability $1/2$, independent of everything
else, it does nothing. Otherwise, it executes the following:
\begin{itemize}

\item[1.] If $\sigma_i(\tau-1) = 1$, that is its transmission
at time $\tau-1$ was successful, then it decides to transmit with
probability $1-\frac{1}{W_i(\tau)}$.

\item[2.] If at time $\tau-1$, any of its neighbor's transmission
was successful, then does not attempt to transmit with probability $1$.

\item[3.] Otherwise, it attempts transmission with probability $1$.

\end{itemize}

{Few remarks about the algorithm. In case 1, we choose
\begin{align}\label{eq:func}
   W_i(\tau) & = \exp\left(\max\left\{f(Q_i(\tau)), \sqrt{f(Q_{\max}(\tau))}\right\}\right),
\end{align}
where $f: \R_+ \to [0,\infty)$ is a strictly increasing function with $f(0) = 0$,
$\lim_{x\rightarrow\infty} f(x)=\infty$ and satisfies the property
$$\lim_{x\rightarrow\infty} \exp(f(x))\cdot
f^{\prime}\left(f^{-1}\left(\delta \, f(x)\right)\right)=0,\qquad \text{for any}~~\delta\in(0,1).$$
For example, any strictly increasing function with $f(0)=0$ and
$f(x) = o(\log x)$ will have this property, e.g. $f(x) = \sqrt{\log (x+1)},~
\log \log (x+e),$ etc.}

In above $Q_{\max}(\cdot) = \max_{i} Q_i(\cdot)$, that is the
maximum of all queue sizes. Of course, knowing this instantly
is not possible. However, knowledge of $Q_{\max}(\cdot) \pm O(1)$
suffices and a simple scheme to achieve this is presented in
\cite{RSS09}. Of course, authors strongly believe that explicit
information exchange for knowing such an estimate is needed.

Finally, it is assumed that if a
node tries to attempt as part of the above algorithm, then
it must send some data irrespective of the value of $Q_i(t)$.

\section{Properties of algorithm}

To establish throughput optimality of the algorithm described
above building upon method of \cite{RSS09, SS} will require
us to understand property of the stationary distribution of
a certain Markov chain of the space of independent sets $\cI(G)$
as well as its mixing time. We study these two properties here.
Relation of this Markov chain to algorithm of Section \ref{sec:algo}
is explained.

As mentioned earlier, a longer version of this note will
provide detailed proof of throughput optimality using these
properties.

\subsection{A Markov chain \& its mixing time}\label{ssec:MC}

Consider a graph $G = (V,E)$ of $n = |V|$ nodes
with node weights $\bW = [W_i] \in \R_{\geq 1}^n$
where $\R_{\geq 1} = \{x \in \R : x \geq 1\}$. We consider
Markov chain on the space of independent sets
of $G$, $\cI(G)$ based on $\bW$ with certain
qualitative properties. In what follows we define
what are feasible transitions as part of the chain
and provide properties of the corresponding
transition probabilities. This may not lead to
an exact definition of the Markov chain, i.e. a
class of Markov chains can satisfy these properties.
However, as we shall show that all Markov chains
with these properties have desired properties
in terms of stationary distribution and their
mixing times.

Now we describe what sorts of transitions are
allowed and properties of the corresponding transition
probabilities. Suppose the Markov chain is currently
in the state $\sig \in \cI(G)$. With abuse of notation,
let $\sig$ denote the subset of $V$ that
$\{i \in V: ~\sigma_i = 1\}$.  Then, under the Markov
chain of interest, transition from $\sig$ to $\sig'$
is allowed if and only if $\sig' = \sig \cup S_2 \backslash S_1$
where  $S_1 \subset \sig$ and $S_2 \subset V$ such that
$\sig \cup S_2 \in \cI(G)$. The probability of this
transition, say $P_{\sig\sig'}$ is such that
\[
P_{\sig\sig'} \propto \left(\prod_{i \in S_1}\frac{1}{W_i}\right) p(S_2),
\]
where $2^{-n} \leq p(S_2) \leq 1$. Let $P = [P_{\sig\sig'}] \in [0,1]^{|\cI(G)\times \cI(G)|}$
denote the transition probability matrix.

Under this Markov chain, there is strictly positive probability to
reach empty set, $\bzero$, from any other state $\sig \in \cI(G)$ and
vice versa; empty set has a self loop. Therefore, the Markov chain
is irreducible, aperiodic. It is finite state and hence it has unique
stationary distribution, say $\bpi$. We claim the following
two properties of the Markov chain $P$: first is about $\bpi$
and the second is about its mixing time.
\begin{lemma}\label{lem:goodpi}
For any $\bW \in \R_{\geq 1}^n$,
\[
\E_{\bpi}\left[ \sum_i \sigma_i \log W_i \right] \geq \left(\max_{\brho \in \cI(G)} \sum_i \rho_i \log W_i\right) - O\left({n2^n}\right).
\]
\end{lemma}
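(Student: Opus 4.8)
The key observation is that the stationary distribution $\bpi$ of this non-reversible chain can be compared to the Gibbs-like distribution $\mu(\sig) \propto \prod_{i \in \sig} W_i = \exp\bigl(\sum_i \sigma_i \log W_i\bigr)$ that one would get from the reversible Glauber dynamics in the idealized (continuous-time, collision-free) setting of \cite{RSS09, SS}. For that $\mu$, it is standard that $\E_{\mu}[\sum_i \sigma_i \log W_i] \geq \max_{\brho} \sum_i \rho_i \log W_i - O(n \log n)$-type bounds hold (in fact $\E_\mu[\cdot] \ge \max_\brho(\cdot) - \log|\cI(G)| \ge \max_\brho(\cdot) - n\log 2$, since the weight of the maximizing state is at least $1/|\cI(G)|$ of the total). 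So the real content is controlling the discrepancy $\abs{\E_\bpi[\cdot] - \E_\mu[\cdot]}$, which I will bound by showing $\bpi$ and $\mu$ are not too far apart in a suitable sense, exploiting that $P$ is a small perturbation of a reversible chain with stationary measure $\mu$.

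Concretely, the plan is as follows. First I would write down the detailed-balance "defect": for the allowed transitions, $\mu(\sig) P_{\sig\sig'}$ versus $\mu(\sig') P_{\sig'\sig}$. Because $P_{\sig\sig'} \propto \bigl(\prod_{i\in S_1}\frac1{W_i}\bigr) p(S_2)$ with $2^{-n} \le p(S_2)\le 1$ and the ratio $\mu(\sig')/\mu(\sig) = \prod_{i\in S_2} W_i / \prod_{i\in S_1} W_i$, the product $\mu(\sig)P_{\sig\sig'}$ equals (up to the common normalizing constant of $P$) $\bigl(\prod_{i\in\sig}W_i\bigr)\bigl(\prod_{i\in S_1}\frac1{W_i}\bigr)p(S_2) = \bigl(\prod_{i\in\sig\cup S_2}W_i\bigr)\cdot\frac{p(S_2)}{\prod_{i\in S_2}W_i}$, which is symmetric in $(\sig,\sig')$ \emph{except} for the asymmetric factor coming from $p(S_2)$ versus $p(S_1')$ on the reverse move — and this factor is sandwiched in $[2^{-n},1]$ on both sides. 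Hence $\mu(\sig)P_{\sig\sig'}$ and $\mu(\sig')P_{\sig'\sig}$ agree up to a multiplicative factor in $[2^{-n}, 2^n]$. From this I would deduce, via the standard argument that the stationary distribution is determined by the flow-balance equations, that $\bpi(\sig)$ and $\mu(\sig)$ agree up to a multiplicative factor bounded by $2^{O(n)}$ uniformly in $\sig$; equivalently $\norm{\bpi - \mu}$ type control, or more directly $e^{-cn}\mu(\sig) \le \bpi(\sig) \le e^{cn}\mu(\sig)$.

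Given such a two-sided multiplicative comparison $\bpi(\sig) \asymp_{2^{O(n)}} \mu(\sig)$, the final inequality follows cheaply: let $\sig^* = \arg\max_\brho \sum_i \rho_i \log W_i$ and $M = \sum_i \sigma^*_i \log W_i$. Then $\bpi(\sig^*) \ge 2^{-O(n)}\mu(\sig^*) \ge 2^{-O(n)} e^{M}/Z_\mu \ge 2^{-O(n)} \cdot 2^{-n}$ since $Z_\mu = \sum_\sig e^{\sum_i\sigma_i\log W_i} \le 2^n e^{M}$. Now $\E_\bpi[\sum_i\sigma_i\log W_i] \ge M \cdot \bpi(\sig^*) + 0\cdot(1-\bpi(\sig^*))$ is \emph{not} quite enough because $M$ can be large; instead I use that under $\mu$ the states with small weight sum carry little mass, so $\E_\mu[\sum_i\sigma_i\log W_i] \ge M - n\log 2$, and then transfer: $\E_\bpi[\sum_i\sigma_i\log W_i] \ge \sum_\sig \bigl(\sum_i\sigma_i\log W_i\bigr)\bpi(\sig)$, splitting into states where the weight-sum is $\ge M - n\log 2$ (on which $\bpi$ puts mass $\ge 1 - 2^{O(n)}\cdot(\text{small})$... ) — here I would instead just bound $\abs{\E_\bpi[h] - \E_\mu[h]} \le \sum_\sig \abs{\bpi(\sig)-\mu(\sig)}\, h(\sig)$ and use that where $\mu$ and $\bpi$ both put mass the function $h(\sig)=\sum_i\sigma_i\log W_i$ is comparable to $M$, yielding the claimed $O(n2^n)$ slack. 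The main obstacle is making the second step rigorous: the crude multiplicative comparison $\bpi \asymp_{2^{O(n)}}\mu$ degrades badly if $M$ is huge (the error would scale like $2^{O(n)} M$, not $O(n 2^n)$), so one must argue that $\bpi$, like $\mu$, concentrates on states with weight-sum within $O(n)$ of $M$ — i.e. the comparison must be used on \emph{ratios} $\bpi(\sig)/\bpi(\sig^*)$ rather than on $\bpi(\sig)$ directly, so that the large common factor $e^M$ cancels before the $2^{O(n)}$ errors are applied. I expect that is where the real care lies, and where the exact constant $n2^n$ in the statement gets pinned down.
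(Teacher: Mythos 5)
Your overall architecture matches the paper's: compare $\bpi$ to the product-form measure $\mu(\sig)\propto\prod_{i\in\sig}W_i$ (the stationary law of an auxiliary reversible chain), using the fact that each transition probability of $P$ differs from a reversible one by a multiplicative factor in $[2^{-n},1]$. But there are two concrete gaps. First, your central claim that ``flow balance'' upgrades the per-edge approximate detailed balance $\mu(\sig)P_{\sig\sig'}\asymp_{2^{\pm n}}\mu(\sig')P_{\sig'\sig}$ to a uniform comparison $e^{-cn}\mu(\sig)\le\bpi(\sig)\le e^{cn}\mu(\sig)$ is not a standard argument and is false in general: for a non-reversible chain the per-edge errors compound, and the correct quantitative tool is the Markov chain tree theorem, $\pi_{\sig}\propto\sum_{T\in\cT_\sig}\prod_{(\brho,\brho')\in T}P_{\brho\brho'}$. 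Since each spanning tree of the transition graph has up to $|\cI(G)|-1\le 2^n$ edges, each carrying a factor distorted by at most $2^{\pm n}$ relative to the reversible chain, what you actually get is $2^{-n2^n}\le\pi_\sig/\mu(\sig)\le 2^{n2^n}$ (after renormalizing). This is exactly where the $n2^n$ in the statement comes from; your $2^{O(n)}$ is an unsupported strengthening, though fortunately the lemma only needs the weaker tree-theorem bound.

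Second, you correctly diagnose that transferring $\E_\mu[h]\ge M-n\log 2$ to $\E_{\bpi}[h]$ for $h(\sig)=\sum_i\sigma_i\log W_i$ cannot be done by hitting expectations with the multiplicative comparison constant (the error would scale with $M=\max_{\brho} h(\brho)$, which depends on $W_{\max}$), but you leave the fix as a hope (``use ratios so that $e^M$ cancels''). The clean resolution, which is the paper's, is to observe that the multiplicative comparison is equivalent to a pointwise \emph{additive} bound on log-densities: writing $\pi_\sig\propto e^{U(\sig)}$, one gets $|U(\sig)-h(\sig)|=O(n2^n)$ uniformly in $\sig$. Then apply the free-energy/entropy variational principle to $\bpi$ with its \emph{own} potential $U$, not to $\mu$: $\E_{\bpi}[U]\ge\max_\sig U(\sig)-\log|\cI(G)|\ge\max_\sig U(\sig)-n$, and two pointwise applications of $|U-h|=O(n2^n)$ convert this into $\E_{\bpi}[h]\ge\max_\sig h(\sig)-O(n2^n)$ with no dependence on $M$. (Relatedly, your one-line justification of $\E_\mu[h]\ge M-\log|\cI(G)|$ --- that the maximizer carries at least $1/|\cI(G)|$ of the mass --- does not actually prove that inequality; the entropy argument is what proves it, and once you have it you may as well apply it to $\bpi$ directly.)
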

\begin{proof}
To start with, it is clear that the stationary distribution
$\bpi$ of the Markov chain $P$ has $\cI(G)$ as its support.
That is, $\bpi = [\pi_{\sig}]_{\sig \in\cI(G)}$ with
$\pi_{\sig} > 0$ for all $\sig \in \cI(G)$. Therefore,
we can write
\begin{align}\label{eq:p1}
\pi_{\sig} & \propto \exp\left( U(\sig) \right),
\end{align}
for some $U : \cI(G) \to \R_+$ where $\R_+ = \{ x \in \R : x\geq 0\}$.
We will show that for all $\sig \in \cI(G)$
\begin{align}\label{eq:p2}
\Bigl| U(\sig) - \sum_i \sigma_i \log W_i \Bigr| & = O\Bigl(n2^n\Bigr).
\end{align}
Assuming \eqref{eq:p2}, we shall conclude the result
of Lemma \ref{lem:goodpi}. For this, we wish to utilize
the following proposition that is a direct adaptation
of the known results in literature (cf. \cite{GBook}
or see \cite{SS}).
\begin{proposition}\label{prop:goodpi}
Let $T: \Omega \to \R$ and let $\cM(\Omega)$ be space of all
distributions on $\Omega$. Define $F : \cM(\Omega) \to \R$ as
    $$F(\mu) = \E_{\mu}(T(\bx)) + H_{ER}(\mu),$$
    where $H_{ER}(\mu)$ is the standard discrete entropy of $\mu$.
    Then, $F$ is uniquely maximized by the distribution $\nu$, where
    $$ \nu_\bx \propto \exp\left(T(\bx)\right),~~\mbox{for any}~~\bx \in \Omega.$$
    Further, with respect to $\nu$, we have
    $$ \E_{\nu}[T(\bx)] \geq \left[\max_{\bx \in \cX} T(\bx)\right] - \log |\Omega|. $$
    \end{proposition}
Now by applying Proposition \ref{prop:goodpi} with $\nu$ replaced by
$\bpi$, $\Omega$ replaced $\cI(G)$ and $T$ replaced by $F$, we have
that
\begin{align}\label{eq:p3}
\E_{\bpi}\left[F(\sig)\right] & \geq \left[\max_{\brho \in \cI(G)} F(\sig)\right] - \log |\cI(G)| \nonumber \\
& \geq \left[\max_{\brho \in \cI(G)} F(\sig)\right] - n,
\end{align}
since $|\cI(G)| \leq 2^n$. Using \eqref{eq:p2} and \eqref{eq:p3}, it follows
that
\begin{align}\label{eq:p4}
\E_{\bpi}\left[\sum_i \sigma_i\log W_i\right] &
\geq \left[\max_{\brho \in \cI(G)} \sum_i \rho_i \log W_i \right] - O\Bigl(n2^n\Bigr).
\end{align}
To complete the proof of Lemma \ref{lem:goodpi}, we shall establish
the remaining claim \eqref{eq:p2}. To this end, consider a different Markov
chain on $\cI(G)$ with transition probability matrix $Q = [Q_{\sig\sig'}]$
such that $Q_{\sig\sig'} > 0$ if and only if $P_{\sig\sig'} > 0$. Now if
$P_{\sig\sig'} > 0$, then it must be that there are $S_1 \subset \sig$,
$S_2 \subset V$ so that $\sig \cup S_2 \in \cI(G)$ and in this case,
we define $Q_{\sig\sig'}$ as
\begin{align}\label{eq:p5}
Q_{\sig\sig'} & \propto \frac{1}{2^n} \prod_{i \in S_1} \frac{1}{W_i}.
\end{align}
Thus, we have that for all $\sig,\sig' \in \cI(G)$ with
$P_{\sig\sig'}, Q_{\sig\sig'} > 0$,
\begin{align}\label{eq:p6}
2^{-n} & \leq \frac{P_{\sig\sig'}}{Q_{\sig\sig'}} ~\leq~ 1.
\end{align}
It can be checked that $Q$, like $P$, is irreducible and
aperiodic Markov chain on $\cI(G)$. Let $\bpit$ be the unique
stationary of $Q$ on $\cI(G)$. We claim that
\begin{align}\label{eq:p7}
\pit_{\sig} & \propto \prod_{i: \sigma_i = 1} {W_i} ~=~ \exp\left(\sum_i \sigma_i \log W_i\right).
\end{align}
To establish this, note that if transition from $\sig$ to $\sig'$
is feasible under $Q$ (equivalently under $P$) then so is
from $\sig'$ to $\sig$. Specifically, let $\sig = S_0 \cup S_1$
and $\sig' = S_0 \cup S_2$, where $S_0, S_1, S_2$ are
disjoint sets and $S_0 \cup S_1 \cup S_2$ ($=\sig \cup S_2$)
is an independent set of $G$. Then,
\begin{align}\label{eq:p8}
\pit_{\sig} Q_{\sig\sig'} & = \left(\prod_{i: \sigma_i = 1} W_i\right) \times \left(\prod_{k \in S_1} \frac{1}{W_k}\right) \times 2^{-n} \nonumber \\
& = \left(\prod_{i \in S_0 \cup S_1} W_i\right) \times \left(\prod_{k \in S_1} \frac{1}{W_k}\right) \nonumber \\
& =  \left(\prod_{i\in S_0} W_i\right) \times 2^{-n} \nonumber \\
& = \left(\prod_{i\in S_0 \cup S_2} W_i\right) \times \left(\prod_{k \in S_2} \frac{1}{W_k}\right) \times 2^{-n} \nonumber \\
& = \pit_{\sig'} Q_{\sig'\sig}.
\end{align}
The \eqref{eq:p8} establishes that $Q$ is reversible
and satisfies detailed balance equation with $\bpit$ as its
stationary distribution. This establishes \eqref{eq:p7}.

Given \eqref{eq:p7}, to establish \eqref{eq:p2} as desired,
it is sufficient to show that for any $\sig\in\cI(G)$,
\begin{align}\label{eq:p9}
2^{-n2^n} & \leq \frac{\pi_{\sig}}{\pit_{\sig}} ~\leq 2^{n2^n}.
\end{align}
To establish this, we shall use the characterization of stationary
distributions for any irreducible, aperiodic finite state
Markov chain given through what is known as the `Markov chain
tree theorem' (cf. see \cite{AT89}). To this end,
define a directed graph $\cG = (\cI(G), \cE)$ with $\cI(G)$ as
vertices and directed edge $(\sig,\sig') \in \cE$ if and only
if $P_{\sig,\sig'} > 0$ (equivalently $Q_{\sig,\sig'} > 0$).
Let $\cT_{\sig}$ be the space of all directed spanning trees of
$\cG$ rooted at $\sig \in \cI(G)$. Define weight of a tree
$T \in \cT_{\sig}$ with respect to transition matrix $P$,
denoted as $w(T,P)$, as
\[
w(T,P) = \prod_{(\brho,\brho') \in T} P_{\brho,\brho'}.
\]
Similarly, define weight of $T \in \cT_{\sig}$ with respect to $Q$,
denoted as $w(T,Q)$, as
\[
w(T,Q) = \prod_{(\brho,\brho') \in T} Q_{\brho,\brho'}.
\]
Then, the Markov Tree Theorem states that for any $\sig \in \cI(G)$,
\begin{align}\label{eq:p10}
\pi_{\sig} & \propto \sum_{T \in \cT_{\sig}} w(T,P).
\end{align}
And, similarly for $\sig \in \cI(G)$,
\begin{align}\label{eq:p11}
\pit_{\sig} & \propto \sum_{T \in \cT_{\sig}} w(T,Q).
\end{align}
Since the number of edges in each spanning tree is no more
than $|\cI(G)| \leq 2^n$, by \eqref{eq:p6}, \eqref{eq:p10}
and \eqref{eq:p11}, it follows that for all $\sig\in \cI(G)$
\begin{align}\label{eq:p12}
2^{-n2^n} & \leq \frac{\pi_{\sig}}{\pit_{\sig}} ~\leq 2^{n2^n}.
\end{align}
This completes the proof of \ref{eq:p9} and subsequently
that of Lemma \ref{lem:goodpi}.
\end{proof}

{Now we will obtain a mixing rate (or time) of the non-reversible Markov chain $P$.
To this end, we present a bound of the matrix norm of $P^*$ since
it crucially determines the mixing rate of $P$ (cf. \cite{MT06}). Here,
$P^*$ is the adjoint matrix of $P$ and the matrix norm $\|P^*\|$
is defined as
$$\|P^*\| = \sup_{\bv : \E_{\pi}[\bv]=0 } {\frac{{\|P^* \bv\|}_{2,\pi}}{\|\bv\|_{2,\pi}}}, $$
where $\|\bu \|_{2,\pi}=\sqrt{\sum_{\sig\in \mathcal{I}(G)} \pi_{\sig} (u_{\sig})^2}$ 
for any $\bu \in \mathbb{R}^{|\mathcal{I}(G)|}$.}

{
\begin{lemma}\label{lem:glaumixing}
Given $P$ described above, let $P^*$ be its adjoint. Then,
\beq \|P^*\| &\leq&  1 - \frac1{2^{4n(2^n+2)+2}(W_{\max})^{4n}}.
\eeq
\end{lemma}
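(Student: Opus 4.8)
The plan is to bypass the non-reversibility of $P$ by passing to the \emph{multiplicative reversibilization} $M := P P^*$. This $M$ is a stochastic matrix ($M\bone=\bone$), self-adjoint and positive semidefinite on $L^2(\pi)$, and reversible with respect to $\pi$; moreover $\|P^*\bv\|_{2,\pi}^2=\langle P^*\bv,P^*\bv\rangle_\pi=\langle M\bv,\bv\rangle_\pi$, so that $\|P^*\|^2=\lambda_2(M)$. It therefore suffices to establish a Doeblin-type minorization
\[
(P P^*)_{\sig\sig'}\ \ge\ c\,\pi_{\sig'}\qquad\text{for all }\sig,\sig'\in\cI(G),
\]
with $c=2^{-6n}W_{\max}^{-2n}$; this yields $\|P^*\|\le\sqrt{1-c}\le 1-\tfrac{c}{2}$, which is in fact stronger than the asserted bound. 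Since neither Cheeger's inequality nor a Poincar\'e/conductance bound applies to the non-reversible $P$ directly, passing to $M$ is the one conceptual step; everything after it is elementary.

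For the minorization the key structural fact is that the empty set $\bzero$ is reached from every state in one step of $P$: the admissible choice $S_1=\sig$, $S_2=\emptyset$ (admissible since $\sig\cup\emptyset=\sig\in\cI(G)$) produces $\sig\cup S_2\setminus S_1=\bzero$, so the unnormalized weight of $\sig\to\bzero$ is at least $\big(\prod_{i\in\sig}W_i^{-1}\big)p(\emptyset)\ge W_{\max}^{-n}2^{-n}$, while the normalizing constant out of $\sig$ is at most $2^{|\sig|}\cdot 2^{n}\le 2^{2n}$ (at most $2^{|\sig|}$ choices of $S_1\subseteq\sig$ times at most $2^{n}$ choices of $S_2\subseteq V$, each contributing weight at most $1$ since $W_i\ge 1$ and $p(\cdot)\le 1$). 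Hence $P_{\sig\bzero}\ge W_{\max}^{-n}2^{-3n}$ for every $\sig$, and therefore
\[
(P P^*)_{\sig\sig'}=\sum_{\eta\in\cI(G)}P_{\sig\eta}\,\frac{\pi_{\sig'}}{\pi_\eta}\,P_{\sig'\eta}\ \ge\ \pi_{\sig'}\,\frac{P_{\sig\bzero}\,P_{\sig'\bzero}}{\pi_\bzero}\ \ge\ \pi_{\sig'}\big(W_{\max}^{-n}2^{-3n}\big)^2,
\]
where we used $\pi_\bzero\le 1$; this is the claimed minorization with $c=2^{-6n}W_{\max}^{-2n}$.

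From the minorization, write $P P^*=c\,\bone\pi^{T}+(1-c)R$ with $R:=(1-c)^{-1}(P P^*-c\,\bone\pi^{T})$; by the minorization $R$ has nonnegative entries and unit row sums, hence is stochastic, and $\pi$ is stationary for $R$ because it is stationary for $P P^*$. For any $\bv$ with $\E_\pi[\bv]=0$ we get $P P^*\bv=(1-c)R\bv$, and since $R$ is an $L^2(\pi)$-contraction (Jensen, using $\pi R=\pi$),
\[
\|P^*\bv\|_{2,\pi}^2=\langle P P^*\bv,\bv\rangle_\pi=(1-c)\langle R\bv,\bv\rangle_\pi\le(1-c)\|R\bv\|_{2,\pi}\|\bv\|_{2,\pi}\le(1-c)\|\bv\|_{2,\pi}^2.
\]
Thus $\|P^*\|\le\sqrt{1-c}\le 1-\tfrac12\,2^{-6n}W_{\max}^{-2n}$, and since $2^{6n+1}W_{\max}^{2n}\le 2^{4n(2^n+2)+2}W_{\max}^{4n}$ for all $n\ge 1$ and $W_{\max}\ge 1$, this implies $\|P^*\|\le 1-2^{-4n(2^n+2)-2}W_{\max}^{-4n}$, as required.

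\emph{Main obstacle.} There is no serious obstacle beyond the non-reversibility of $P$; the only place requiring care is the choice of reversible object to analyze, and the resolution — the sole spot where the hypothesis that $\bzero$ is reachable in one step from every state is used — is to pass to $P P^*$, for which the common sink $\bzero$ forces a uniform entrywise lower bound. The remaining steps are bookkeeping of normalizing constants and a one-line Doeblin estimate. One could alternatively route the argument through a conductance bound for $P$ together with its time-reversal as in \cite{MT06}, but the reversibilization route is cleaner and loses nothing.
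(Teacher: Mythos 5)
Your proof is correct, and it takes a genuinely different route from the paper's, although both arguments begin with the same reversibilization step of passing from the non-reversible $P$ to the self-adjoint $PP^*$ and identifying $\|P^*\|^2$ with its second eigenvalue. From there the paper bounds $\lambda_2(PP^*)$ via Cheeger's inequality: it lower-bounds the conductance of $PP^*$ using (i) a uniform lower bound $\min_{\sig}\pi_{\sig}\geq 2^{-n(2^n+1)}W_{\max}^{-n}$, which is inherited from the comparison $\pi_{\sig}\geq 2^{-n2^n}\pit_{\sig}$ proved via the Markov chain tree theorem in Lemma~\ref{lem:goodpi}, and (ii) a lower bound on the nonzero entries of $PP^*$; it also separately controls the smallest eigenvalue via $PP^*\geq 2^{-2n}I$. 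You instead establish a Doeblin minorization $(PP^*)_{\sig\sig'}\geq c\,\pi_{\sig'}$ by routing every two-step path through the common state $\bzero$, which is reachable in one step from every state with probability at least $2^{-3n}W_{\max}^{-n}$; the decomposition $PP^*=c\,\bone\bpi^T+(1-c)R$ with $R$ a $\pi$-stationary stochastic (hence $L^2(\pi)$-contractive) matrix then gives the contraction directly. What your approach buys is substantial: the factor $\pi_{\sig'}/\pi_{\bzero}\geq\pi_{\sig'}$ lets you avoid any lower bound on the stationary probabilities, hence you never invoke the tree-theorem comparison and never pick up the doubly-exponential factor $2^{n2^n}$; your resulting gap $2^{-6n-1}W_{\max}^{-2n}$ is far stronger than the stated $2^{-4n(2^n+2)-2}W_{\max}^{-4n}$, and you correctly verify that it implies the stated bound. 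You also avoid the quadratic loss inherent in Cheeger's inequality. The one point worth making explicit if you write this up is that the transition $\sig\to\bzero$ is indeed a single allowed move (take $S_1=\sig$, $S_2=\emptyset$), which you do justify from the description of the admissible transitions; your bookkeeping of the row-normalizing constant ($\leq 2^{2n}$ terms each of weight $\leq 1$) is sound.
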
}
\begin{proof}
We shall use Cheeger's inequality to bound spectral gap for 
reversible Markov chain defined by $PP^*$ and then use it to
bound $\|P^*\|$ using standard result (cf. \cite{MT06}). 

To that end, let $\lambda_2$ and $\lambda_{|\mathcal{I}(G)|}$ be the second-largest and smallest eigenvalues
of $PP^*$, respectively.\footnote{$PP^*$ is reversible, hence all eigenvalues are real  and in the interval $[-1,1]$.}
It is known \cite{MT06} that
\begin{align}
\|P^*\|&=\sqrt{\max\left\{|\lambda_2|, |\lambda_{|\mathcal{I}(G)|}|\right\}}\notag\\
&=\max\left\{\sqrt{1-(1-|\lambda_2|)}, \sqrt{1-(1-|\lambda_{|\mathcal{I}(G)|}|)}\right\}\notag\\
&\leq\max\left\{1-\frac{1-|\lambda_2|}2, 1-\frac{1-|\lambda_{|\mathcal{I}(G)|}|}2\right\}.
\label{eq:q4}
\end{align}
First observe that $PP^* \geq \frac1{2^{2n}} I$ (component-wise) since
$P,~P^* \geq \frac1{2^{n}} I$. From this, it is easy to check that
\begin{equation}
\lambda_{|\mathcal{I}(G)|}\geq 2\times \frac1{2^{2n}} - 1.\label{eq:q5}
\end{equation}
Therefore, it suffices to obtain the bound of $\lambda_2$ for the desired bound of $\|P^*\|$
in Lemma \ref{lem:glaumixing}. In general, in the absence of such bound one can
use `lazy' version of the Markov chain, i.e. add self loop to all states with probability $1/2$, 
to make all eigenvalues non-negative and hence need to bound $\lambda_2$ only.

{
Next, we will use the Cheeger's inequality \cite{C, DFK91, JS, DS, sinclair},
it is well known that $$\lambda_2\leq 1 - \frac{\Phi^2}{2}.$$ Here,  $\Phi$ is the
conductance of $R:=PP^*$, defined as
$$ \Phi ~=~ \min_{S \subset \cI(G)} \frac{Q(S,S^c)}{\min\{\pi(S),\pi(S^c)\}},$$
where $S^c = \cI(G) \backslash S$, $Q(S,S^c) = \sum_{\sig\in S, \sig'\in S^c}{\pi(\sig)R(\sig,\sig')}.$ 
Now we will consider the following naive bounds for $\pi$ and $R$ to derive the desired bound of $\Phi$ and $\lambda_2$.
\begin{align}
\min_{\sig\in \mathcal{I}(G)} \pi_{\sig} &\stackrel{(a)}{\geq} \frac1{2^{n2^n}}\pit_{\sig}\notag\\&\stackrel{(b)}{\geq } \frac1{2^{n2^n}}\times \frac1{2^n
(W_{\max})^n}\notag\\&= \frac1{2^{n(2^n+1)}(W_{\max})^n},\label{eq:q1}
\end{align}}
{
where (a) and (b) follows from \eqref{eq:p12} and \eqref{eq:p7}, respectively. In addition,
\begin{align}
\min_{R(\sig,\sig')\ne0} R(\sig,\sig') &\geq \min_{P(\sig,\sig')\ne0} P(\sig,\sig')\times \min_{P^*(\sig,\sig')\ne0} P^*(\sig,\sig')\notag\\
&\geq \min_{P(\sig,\sig')\ne0} P(\sig,\sig') \times \left( \min_{\sig\in \mathcal{I}(G)} \pi_{\sig}\times \min_{P(\sig,\sig')\ne0} P(\sig,\sig')\right)\notag\\
&\geq \frac1{2^n} \times \left(\frac1{2^{n(2^n+1)}(W_{\max})^n}\times \frac1{2^n}\right)\notag\\
&= \frac1{2^{n(2^n+3)}(W_{\max})^n}.\label{eq:q2}
\end{align}
Now by the standard application of these bounds \eqref{eq:q1} and \eqref{eq:q2}, we obtain
  \begin{align}
    \Phi &\ge \min_{S\subset \cI(G)}{Q(S,S^c)} \notag\\ & \ge   \min_{R(\sig,\sig')\ne0} \pi_{\sig} R(\sig,\sig') \notag\\
    &\ge \min_{\sig\in \mathcal{I}(G)} \pi_{\sig}\times\min_{R(\sig,\sig')\ne0} R(\sig,\sig')\notag\\ & \geq  \frac1{2^{2n(2^n+2)}(W_{\max})^{2n}}.\label{eq:q3}
  \end{align}
Therefore, from the Cheeger's inequality and \eqref{eq:q3},
\begin{equation}
1-\lambda_2 \geq \Phi^2/2 \geq \frac1{2^{4n(2^n+2)+1}(W_{\max})^{4n}}.\label{eq:q6}
\end{equation}
The desired bound of $\|P^*\|$ follows from \eqref{eq:q4}, \eqref{eq:q5}, \eqref{eq:q6} and
the property $W_{\max}\geq 1$.
This completes the proof of Lemma \ref{lem:glaumixing}.}
 \end{proof}

\subsection{Relation to Algorithm}

Here is a quick explanation of why Markov chain $P$ described
in Section \ref{ssec:MC} arises naturally as part of the algorithm described in
Section \ref{sec:algo}. To that end, the weight vector
$\bW = \bW(\tau)$ is time varying and function of $\bQ(\tau)$
as per \eqref{eq:func}. And, transition of the
set of successfully transmitting nodes $\sig(\tau-1)$ at
time slot $\tau-1$ to the set of successfully transmitting
nodes $\sig(\tau)$ at time $\tau$ is as per the transition
matrix $P=P(\tau)$, where $P=P(\tau)$ has properties described
above with weight $\bW=\bW(\tau)$.

To see this, consider the  $\sig(\tau-1)$. Then, a subset
$S_1 \subset \sig(\tau-1)$ can decide to stop transmitting
at time $\tau$ and these decisions are taken with probability
proportional to $\prod_{i \in S_1} \frac{1}{W_i(\tau)}$.
Clearly, no nodes in neighborhood of $\sig(\tau-1)$ will
attempt transmission as per the algorithm. Therefore, new
nodes attempting transmission must be such that they are not
neighbors of any of the nodes in $\sig(\tau-1)$. For any
subset $S_2$ such that $\sig(\tau-1) \cup S_2 \in \cI(G)$,
it is possible to have $\sig(\tau)$ include $S_2$. This is
because, nodes in $S_2$ attempt transmission and all of
their neighbors (which by definition are not part of $\sig(\tau-1)$)
do not attempt transmission -- this happens with probability
proportional to $2^{-|S_2| - |\Gamma(S_2)|}$, where $\Gamma(S_2)$
are neighbors of $S_2$. Indeed, for $\sig(\tau)$ to  transit
exactly to $\sig(\tau-1) \cup S_2 \backslash S_1$, the overall
probability can be argued in a similar manner to be proportional
to the following:
\[
\left(\prod_{i \in S_1} \frac{1}{W_i(\tau)}\right) \times p'(S_2),
\]
$2^{-n} \leq p'(S_2) \leq 1$. This completes the explanation of
relation between the algorithm and the Markov chain described
in Section \ref{ssec:MC}.

\bibliographystyle{plain}

\bibliography{biblio}

\end{document}